\tikzstyle{every node}=[circle, draw, fill=black!70,
\newcounter{nalg} 
\renewcommand{\thenalg}{\thechapter .\arabic{nalg}} 
\newtheorem{lemma}{Lemma}
\newtheorem{theorem}{Theorem}
\theoremstyle{definition}
\newtheorem{definition}{Definition}
\theoremstyle{definition}
\title{Nonparametric geometric outlier detection}
\author{Matias Heikkilä\footnote{Aalto University School of Science}}
\begin{document}
\maketitle

\begin{abstract}
Outlier detection is a major topic in robust statistics due to the high practical significance of anomalous observations. Many existing methods are, however, either parametric or cease to perform well when the data is far from linearly structured. In this paper, we propose a quantity, Delaunay outlyingness, that is a nonparametric outlyingness score applicable to data with complicated structure. The approach is based a well known triangulation of the sample, which seems to reflect the sparsity of the pointset to different directions in a useful way. In addition to appealing to heuristics, we derive results on the asymptotic behaviour of Delaunay outlyingness in the case of a sufficiently simple set of observations. Simulations and an application to financial data are also discussed.
\end{abstract}

\emph{Running headline:} Nonparametric outlier detection

\emph{Keywords: } Robust statistics, Delaunay triangulation, Voronoi diagram

\section{Introduction}

Heuristically, an outlier is a point that is different from the majority of the data. According to \citet{hawkins}: "An outlier is an observation that deviates so much from other observations as to arouse suspicion that it was generated by a different mechanism." Due to their significance in various contexts, several methods for outlier detection have been proposed (\citet{outlierSurvey}).

In this paper, we propose a simple nonparametric approach to outlier detection that is based on the geometry of the sample. The method seems to successfully distinguish outliers from samples that are problematic for many existing outlier detection methods (see Figure \ref{fig::example}). We also derive a result, according to which, the method will asymptotically detect a finite set of points outside a sufficiently simple set.

The method is based on the Delaunay triangulation of a sample. In addition to having desirable theoretical properties and demonstrating good performance in both simulated and real-data applications, the method appeals to intuition. Implementation is also feasible, as computational aspects of the Delaunay triangulation have been extensively studied and efficient algorithms are known \citep{algo}. There is, however, rather little previous literature on Delaunay triangulations, or the dual structure, Voronoi diagrams, in relation to outliers (see e.g. \citet{previous}, \citet{previous2} and \citet{VoronoiPrev}).

The paper is organized as follows: Delaunay outlyingness is introduced in Section \ref{sec::outlyingness}. Certain asymptotic properties of Delaunay outlyingness in the case of a random variable taking values from a compact strictly convex set are considered in Section \ref{sec::theory}. We apply Delaunay outlyingness to simulated and real in Section \ref{sec::simus}. Concluding remarks are given in \ref{sec::discussion}. Technical proofs are given in the Appendix.

\begin{figure}[h]
\begin{minipage}{0.5\textwidth}

\includegraphics[scale=0.4]{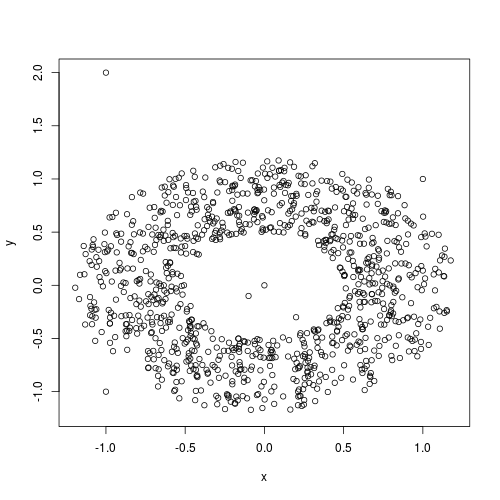}
\end{minipage}
\begin{minipage}{0.45\textwidth}

\includegraphics[scale=0.4]{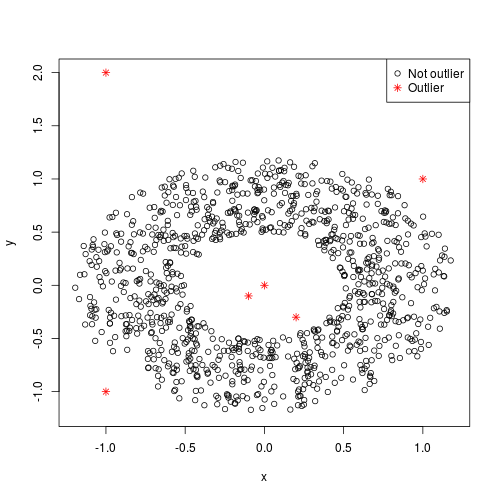}
\end{minipage}

\caption{Two plots of a sample. In the right plot the six points with the largest values of Delaunay outlyingness are highlighted.}
\label{fig::example}
\end{figure}

\section{Delaunay outlyingness}\label{sec::outlyingness}

The Delaunay triangulation is a well known object in geometry due to its special properties and relation to other interesting objects such as the minimum spanning tree \citep{geometers}. Well studied applications of Delaunay triangulations include density estimation \citep{phDThesis} and finite element methods \citep{meshGen}. 

We begin by reviewing the definition of the Delaunay triangulation. Degeneracies can be excluded as follows.

\begin{definition}\label{def::generalPosition}
A finite set of points $F \subset \mathbb{R}^k$ is said to be in general position if the points do not lie on a $k-1$ dimensional plane and no $k+2$ points are cospherical.
\end{definition}

\noindent
Let $X: \Omega \to \mathbb{R}^k$ be a random variable with a density. Then the set of values of $n > k$ i.i.d. copies of $X$ are in general position with probability one.

\begin{definition}
Let $F \subset \mathbb{R}^k$ be a finite set of points in general position. The Voronoi diagram of $F$ is the collection of sets $\left\{ V_x \right\}_{x \in F}$ defined by
\begin{equation*}
V_x = \left\{ y \in \mathbb{R}^k \: \middle| \: d(x,y) \leq d(z,y) \text{ for all }z \in F \right\},
\end{equation*}
where $d$ is the Euclidean metric. The sets $V_x$ are called Voronoi cells.
\end{definition}

For points in general position, the Delaunay triangulation can be uniquely defined as the dual of the Voronoi diagram (for reference see e.g. \cite{VoronoiDual} Chapter 23). Recall that a (finite) undirected graph is a pair $G=(F,E)$ with $F$ a finite set and $E \subset \left\{ \left\{ x,y \right\} \: \middle| \: x,y \in F \right\}$. The set $F$ is called the set of nodes of $G$ and the set $E$ is called the set of edges of $G$. Points $x,y \in F$ are said to be adjacent if $\left\{ x,y \right\} \in E$. 
\begin{definition}
Let $F \subset \mathbb{R}^k$ be a finite set of points in a general position and let $\left\{ V_x \right\}_{x \in F}$ be the Voronoi diagram of $F$. The Delaunay triangulation of $F$, denoted $\mathcal{DT}(F)$, is the undirected graph $(F,E)$ with the following property. If $x,y \in F$, $x \neq y$, then $\left\{ x,y \right\} \in E$ if $V_x \cap V_y \neq \varnothing$.
\end{definition}
\noindent
The duality between the Voronoi diagram and the Delaunay triangulation of $F$ is illustrated in Figure \ref{fig::VoronoiDelaunay}. Note that that points are adjacent if and only if the corresponding Voronoi cells $V_x$ share boundaries. Note also that the edges of the Delaunay triangulation tend to be shorter for points whose corresponding Voronoi cells are small. The heuristic is that the distance between a point and the edge of the corresponding Voronoi cell \emph{to a direction} is related to the nearest neighbour of the point to that direction (more specific statement given in Lemma \ref{lemma::checkAdj}).

\begin{figure}
\centering
\includegraphics[scale=0.5]{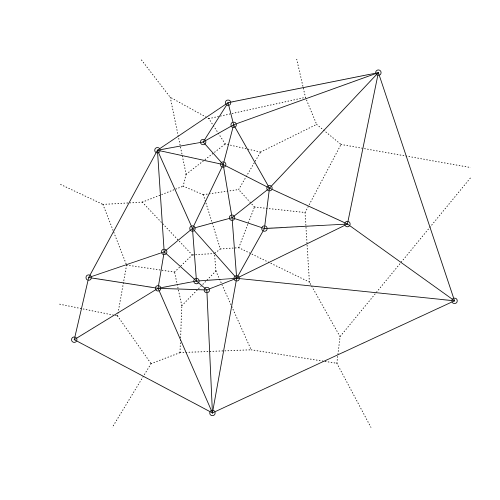}
\caption{A plot of a set of points $x$, the corresponding Voronoi diagram (boundaries of $V_x$ dotted lines) and the Delaunay triangulation of the points (solid lines).}
\label{fig::VoronoiDelaunay}
\end{figure}

We define some notation for convenience.
\begin{definition}\label{def::notationL}
Let $x,y \in \mathbb{R}^k$. We denote by $L(\left\{ x,y \right\})$ the line segment connecting the points $x$ and $y$, i.e.
\begin{equation*}
L(\left\{ x,y \right\}) = \left\{ \lambda x + (1-\lambda)y \: \middle| \: 0 \leq \lambda \leq 1 \right\}.
\end{equation*}
\end{definition}

\begin{definition}
Let $T \subset \mathbb{R}^k$ be a line segment, i.e. there are $x,y \in \mathbb{R}^k$ such that $T = L( \left\{x,y \right\})$. We denote the length of the line segment $T$ by $\lambda(T)$, i.e.
\begin{equation*}
\lambda(T) = d(x,y).
\end{equation*}
\end{definition}
\noindent
We propose the following quantity for an outlyingness measure of a point.
\begin{definition}\label{def::outlyingness}
Let $F \subset \mathbb{R}^k$ be a finite set of points in general position. Let $\mathcal{DT}(F) = (F,E)$. For $x \in F$, denote by $E(x)$ the set of edges $e \in E$ such that $x \in e$. The Delaunay outlyingness function $f_F: F \to \mathbb{R}$ is
\begin{equation*}
f_F(x) = \left( \prod \limits _{e \in E(x)} \lambda ( L(e)) \right)^{1 / \left| E(x) \right|}.
\end{equation*}
That is, the geometric mean of the lengths of the edges in $E(x)$.
\end{definition}

To illustrate the manner in which Delaunay outlyingness reflects the structure of the sample, consider the sample $F$ plotted in Figure \ref{fig::example}. Three plots of the values of $f_F$ are given in Figure \ref{fig::plotsOfExample}. In the leftmost plot all values of $f_F$ are displayed, in the middle plot only the values of $f_F$ at most $0.1$ are displayed and in the rightmost plot, only the values of $f_F$ at most $0.01$ are displayed.
\begin{figure}[h]
\begin{minipage}{0.30\textwidth}

\includegraphics[scale=0.25]{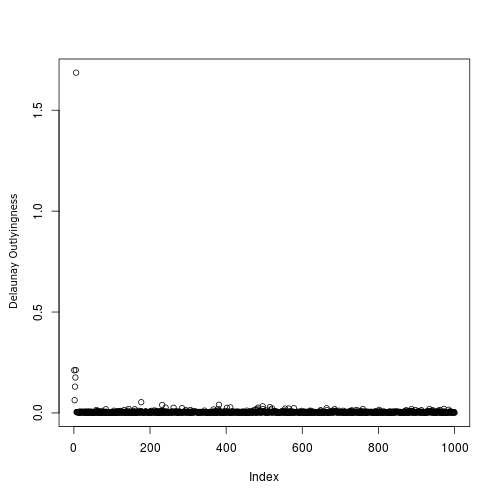}
\end{minipage}
\begin{minipage}{0.30\textwidth}

\includegraphics[scale=0.25]{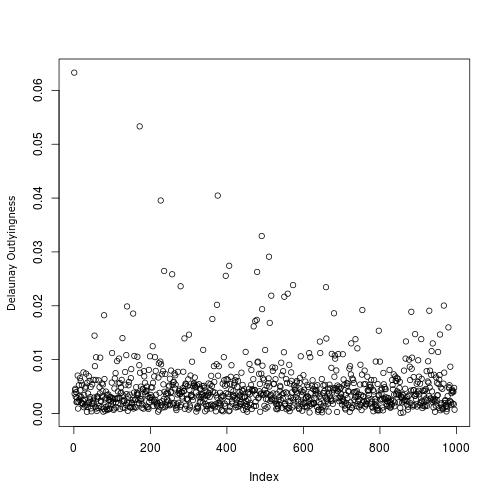}
\end{minipage}
\begin{minipage}{0.30\textwidth}

\includegraphics[scale=0.25]{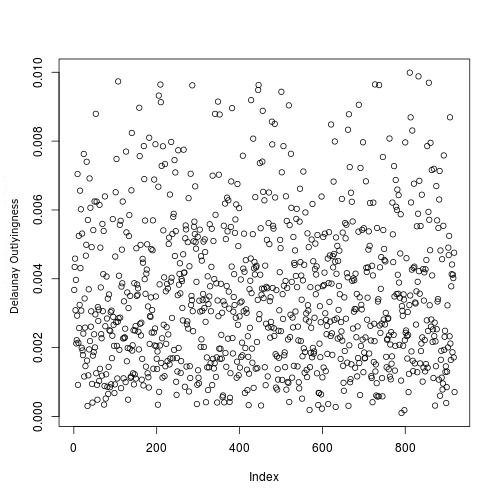}
\end{minipage}

\caption{Three plots of the values of the Delaunay outlyingness of the points displayed in Figure \ref{fig::example}.}
\label{fig::plotsOfExample}
\end{figure}

\section{A consistency property}\label{sec::theory}

\begin{definition}
A compact convex set $K \subset \mathbb{R}^k$ is a compact strictly convex set, if, for any $x,y \in K$ and $0 < \lambda < 1$, the point $\lambda x + (1- \lambda)y$ is an interior point of $K$.
\end{definition}

We establish a consistency property in the case of a compact strictly convex set: Consider a sample $F \cup U_n$ consisting of a finite set of points $F$ disjoint from a compact convex set $K$ and a set of $n$ observations $U_n \subset K$. Now for all $x \in F$, $f_{F \cup U_n}(x) > \varepsilon$ for some $\varepsilon > 0$ and all $n$. Simultaneously $f_{F \cup U_n}(x) \to_P 0$ for all $x \in U_n$.

We define some notation used throughout this section. Let $x \in \mathbb{R}^k$ and $S, T \subset \mathbb{R}^k$. Then
\begin{equation*}
d(x, S) = \inf \limits_{s \in S} d(x,s)
\end{equation*}
and
\begin{equation*}
d(S,T) = \inf \limits_{s \in S, t \in T} d(s,t).
\end{equation*}
We will also denote the boundary of a set $A \subset \mathbb{R}^k$ by $\partial A $.

\begin{definition}
Let $A \subset \mathbb{R}^k$. The boundary of $A$, denoted $\partial A$, is the set of points $x \in \mathbb{R}^k$ with the following property: Let $U \ni x$ be any open neighbourhood of $x$. Then $U \cap A \neq \varnothing$ and $U \cap \left(  \mathbb{R}^k \setminus A \right) \neq \varnothing$.
\end{definition}

\noindent
The boundary of any set is closed. Also, a compact set in $\mathbb{R}^k$ contains its boundary. The only subsets of $\mathbb{R}^k$ that have an empty boundary are $\varnothing$ and $\mathbb{R}^k$.

\begin{lemma}\label{lemma::strictlyConvex}
Let $K \subset \mathbb{R}^k$ be a compact strictly convex set and let $\varepsilon > 0$, $\varepsilon < \mathrm{Diam}(K)$. There is $\theta( \varepsilon )>0$, such that, for all $x, y \in K$, $d(x,y) = \varepsilon$,
\begin{equation}
d \left( \frac{1}{2}x + \frac{1}{2} y, \partial K \right) \geq \theta ( \varepsilon ).
\end{equation}
\end{lemma}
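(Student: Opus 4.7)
The plan is to use a compactness argument. Define the set
\[
C = \{(x,y) \in K \times K : d(x,y) = \varepsilon\}
\]
and the function $g : C \to \mathbb{R}$ given by $g(x,y) = d\bigl(\tfrac{1}{2}x+\tfrac{1}{2}y,\partial K\bigr)$. I want to conclude that $g$ attains a strictly positive minimum, which I would take as $\theta(\varepsilon)$.

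First I would check that $C$ is nonempty and compact. Since $K \times K$ is compact and the Euclidean metric is continuous, $C$ is closed in $K \times K$ and hence compact. For non-emptiness I would use that $K$ is compact so the diameter is attained by some pair $x_0,y_0 \in K$ with $d(x_0,y_0) = \mathrm{Diam}(K) > \varepsilon$; then the segment from $x_0$ to $y_0$ lies in $K$ by convexity, and the intermediate value theorem applied to $t \mapsto d(x_0,(1-t)x_0+ty_0)$ produces a point at distance exactly $\varepsilon$ from $x_0$.

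Next I would check that $g$ is continuous on $C$: the midpoint map $(x,y) \mapsto \tfrac{1}{2}(x+y)$ is continuous, and the distance to the closed set $\partial K$ is 1-Lipschitz, so the composition is continuous. For every $(x,y) \in C$ we have $d(x,y) = \varepsilon > 0$, so $x \neq y$, and strict convexity (Definition of compact strictly convex set, with $\lambda = 1/2$) forces $\tfrac{1}{2}x+\tfrac{1}{2}y$ to be an interior point of $K$. Since $\partial K$ is closed and disjoint from the interior, $g(x,y) > 0$ pointwise on $C$.

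Finally, a continuous function on a nonempty compact set attains its infimum, so I set $\theta(\varepsilon) = \min_{(x,y) \in C} g(x,y) > 0$ and the inequality in the statement follows by definition. The only place where anything delicate happens is the use of strict convexity to guarantee pointwise positivity of $g$; everything else is standard compactness and continuity. I do not anticipate a real obstacle, though one should be slightly careful that strict convexity in the sense given here implicitly forces $K$ to be full-dimensional (otherwise the midpoint can never be interior), which is consistent with the hypothesis $\varepsilon < \mathrm{Diam}(K)$ being non-vacuous.
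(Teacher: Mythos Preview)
Your proposal is correct and follows essentially the same route as the paper: define the constraint set $\{(x,y)\in K\times K: d(x,y)=\varepsilon\}$, observe it is nonempty and compact, note that $(x,y)\mapsto d\bigl(\tfrac12 x+\tfrac12 y,\partial K\bigr)$ is continuous and strictly positive there by strict convexity, and take $\theta(\varepsilon)$ to be the attained minimum. The only cosmetic differences are that the paper argues positivity by contradiction (finding a minimizing $p\in\partial K$) whereas you argue it directly, and your non-emptiness argument via the intermediate value theorem is a bit more explicit than the paper's.
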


Note that $\theta(\varepsilon)$ is a monotone increasing function with respect to $\varepsilon> 0$. Lemma \ref{lemma::strictlyConvex} allows one to derive the following result.

\begin{lemma}\label{lemma::insideBound}
Let $K \subset \mathbb{R}^k$ be a compact strictly convex set. Let $X_n: \Omega \to K$ be a sequence of i.i.d random variables with a nonzero density over $K$ and denote the sample $\left\{ X_1(\omega), \dots, X_n ( \omega )\right\}$ by $U_n$. Consider the random Delaunay triangulations $\mathcal{DT}(U_n) =(U_n, E_n)$ and let
\begin{equation*}
\Lambda_n = \max \limits_{e \in E_n} \lambda( L(e)).
\end{equation*}
Then
\begin{equation*}
\Lambda_n \to_P 0.
\end{equation*}
\end{lemma}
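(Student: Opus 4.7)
The plan is to establish $P(\Lambda_n \geq \varepsilon) \to 0$ for each fixed $\varepsilon > 0$, by combining a deterministic geometric observation with a standard compactness-and-covering argument. The high-level idea is that a long Delaunay edge forces a large empty region inside $K$, while dense i.i.d.\ sampling rules out any such empty region.

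\emph{Step 1 (geometric).} Suppose $\{x, y\} \in E_n$ with $d(x, y) \geq \varepsilon$. Since $V_x \cap V_y \neq \varnothing$, pick $z$ in this intersection; then $d(z, x) = d(z, y) = r$ and $d(z, u) \geq r$ for every $u \in U_n$. In particular $r \geq \varepsilon/2$ by the triangle inequality, and the open ball $B(z, r)$ contains no sample point. Let $m = (x + y)/2$. By convexity $m \in K$, and by Lemma~\ref{lemma::strictlyConvex}, $d(m, \partial K) \geq \theta(\varepsilon)$, so $B(m, \theta(\varepsilon)) \subset K$. The goal is to construct, inside $K \cap B(z, r)$, a ball of some radius $\rho(\varepsilon) > 0$ depending only on $\varepsilon$ and $K$. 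The main obstacle is that $z$ may lie far outside $K$ with $r$ correspondingly large, in which case the empty Delaunay ball is very thin near $m$ on the side opposite $z$. The fix is to shift the candidate center from $m$ in the direction of $z$: a short case analysis (distinguishing whether $r \geq \tfrac{1}{2}\sqrt{\theta(\varepsilon)^2 + \varepsilon^2}$ or not) produces a point $p$ and an explicit positive $\rho(\varepsilon)$ with $B(p, \rho(\varepsilon)) \subset K \cap B(z, r)$, which is therefore empty of samples.

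\emph{Step 2 (probability).} By compactness, cover $K$ with finitely many open balls $B(c_1, \rho(\varepsilon)/2), \dots, B(c_N, \rho(\varepsilon)/2)$ with centers $c_i \in K$. If $B(p, \rho(\varepsilon)) \subset K$ is empty of samples with $p \in K$, then $p \in B(c_i, \rho(\varepsilon)/2)$ for some $i$, and the triangle inequality yields $B(c_i, \rho(\varepsilon)/2) \subset B(p, \rho(\varepsilon))$, so this fixed ball is empty as well. Each $B(c_i, \rho(\varepsilon)/2) \cap K$ has nonempty interior (strict convexity of $K$ forces $K$ itself to have nonempty interior), so by positivity of the density, $p_i := P(X_1 \in B(c_i, \rho(\varepsilon)/2)) > 0$. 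A union bound then gives
\[
P(\Lambda_n \geq \varepsilon) \;\leq\; \sum_{i=1}^N (1 - p_i)^n \;\longrightarrow\; 0
\]
as $n \to \infty$, and since $\varepsilon > 0$ was arbitrary, $\Lambda_n \to_P 0$. The crux of the whole argument is the geometric construction in Step 1; once that is in place, Step 2 is routine.
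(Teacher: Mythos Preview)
Your argument is correct and follows the same core line as the paper: both use Lemma~\ref{lemma::strictlyConvex} to turn a Delaunay edge of length $\geq \varepsilon$ into an empty region in $K$ of size governed by $\theta(\varepsilon)$, and then invoke compactness, a finite cover, and the positive density to rule such regions out asymptotically. The organization differs slightly. The paper covers $K$ first by balls of radius $\min\{\varepsilon/8,\theta(\varepsilon)/4\}$ and, on the event that every cover ball contains a sample, works directly with the test point $p$ of Lemma~\ref{lemma::checkAdj}, splitting on whether $p\in K$ (a sample near $p$ is closer to $p$ than $x$ is) or $p\notin K$ (the segment from $m=(x+y)/2$ to $p$ crosses $\partial K$ at some $q$, and a sample near $q$ is closer). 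You instead first manufacture a fixed-radius empty ball $B(p,\rho(\varepsilon))\subset K$ and only afterwards bring in the cover and a union bound. The paper's route is marginally more direct since it never needs the explicit empty-ball construction; your route cleanly separates the deterministic geometry from the probability. One caveat: your Step~1 case analysis is only sketched, and the stated threshold $\tfrac12\sqrt{\theta(\varepsilon)^2+\varepsilon^2}$ tacitly assumes $d(x,y)=\varepsilon$ rather than $\geq\varepsilon$. The gap is easily filled---for instance, shifting $m$ a distance $\theta(\varepsilon)/2$ toward $z$ yields $\rho(\varepsilon)=\theta(\varepsilon)/2$ uniformly, using $\theta(\varepsilon)\leq\varepsilon/2\leq r$---but as written it is not quite complete.
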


\begin{theorem}\label{thm::consistency}
Let $K \subset \mathbb{R}^k$ be a compact strictly convex set. Let $X_n: \Omega \to K$ be a sequence of i.i.d random variables with a nonzero density over $K$ and denote the sample $\left\{ X_1(\omega), \dots, X_n ( \omega )\right\}$ by $U_n$. Let $F \subset \mathbb{R}^k$ be a finite set such that $d(F,K) > 0$. Consider the Delaunay outlyingness $f_{U_n \cup F}$. There is $\delta > 0$ such that
\begin{equation*}
f_{U_n \cup F}(x) > \delta
\end{equation*}
for all $x \in F$. Also, for all $u \in U_n$,
\begin{equation*}
f_{U_n \cup F} (x) \leq \gamma_n
\end{equation*}
with $\gamma_n \to_P 0$.
\end{theorem}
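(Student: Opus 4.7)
The plan is to treat the two assertions in the theorem separately.

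\textbf{Lower bound for $x \in F$.} I would estimate every individual edge length at $x$ from below. Let $\delta_F = \min_{x \neq y \in F} d(x,y)$, which is positive since $F$ is finite and in general position, and set $\delta = \min(\delta_F, d(F,K)) > 0$. Any edge of $\mathcal{DT}(U_n \cup F)$ incident to $x \in F$ has its other endpoint either in $F$ (length at least $\delta_F$) or in $U_n \subseteq K$ (length at least $d(F,K)$), hence length at least $\delta$. The geometric mean of lengths bounded below by $\delta$ is at least $\delta$, which gives the required uniform lower bound.

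\textbf{Upper bound for $u \in U_n$.} The reduction to Lemma \ref{lemma::insideBound} rests on an elementary monotonicity observation: if $u, v \in U_n$ are Delaunay adjacent in $\mathcal{DT}(U_n \cup F)$, then they are Delaunay adjacent in $\mathcal{DT}(U_n)$, since the empty open ball witnessing the edge in $\mathcal{DT}(U_n \cup F)$ is a fortiori empty with respect to the subset $U_n$. Consequently, every edge from $u$ whose other endpoint lies in $U_n$ has length at most $\Lambda_n$. Write $|E(u)| = m_u + k_u$, where $m_u$ counts the edges from $u$ into $U_n$ and $k_u \leq |F|$ counts the edges from $u$ into $F$; the latter have length at most $M := \mathrm{Diam}(K \cup F) < \infty$.

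To exploit this decomposition I also need $m_u \geq 1$, which I would obtain from a nearest-neighbor argument. On the event $\{\Lambda_n < d(F, K)\}$, the nearest neighbor of $u$ in $(U_n \cup F) \setminus \{u\}$ must lie in $U_n$, because some $U_n$-point is at distance at most $\Lambda_n$ from $u$ while every $F$-point is at distance at least $d(F,K)$. Since the nearest-neighbor graph is a subgraph of the Delaunay triangulation, this produces a $U_n$-to-$U_n$ Delaunay edge at $u$, and hence $m_u \geq 1$. Combined with $k_u \leq |F|$ this yields $m_u / |E(u)| \geq 1/(|F| + 1)$.

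Putting the bounds together, on the event $\{\Lambda_n < \min(1, d(F,K))\}$ one obtains
\begin{equation*}
\log f_{U_n \cup F}(u) \leq \frac{m_u}{|E(u)|} \log \Lambda_n + \frac{k_u}{|E(u)|} \log M \leq \frac{1}{|F|+1} \log \Lambda_n + |\log M|
\end{equation*}
uniformly in $u \in U_n$. Since $\Lambda_n \to_P 0$ by Lemma \ref{lemma::insideBound}, the right-hand side tends to $-\infty$ in probability, so $\gamma_n := \max_{u \in U_n} f_{U_n \cup F}(u) \to_P 0$. The conceptual obstacle is that the $k_u \leq |F|$ edges from $u$ into $F$ cannot be made to shrink with $n$; working on the log scale handles this, because the contribution of boundedly many bounded-length edges is then diluted by the presence of at least one genuinely short intra-$U_n$ edge.
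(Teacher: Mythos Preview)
Your proof is correct and follows essentially the same strategy as the paper's: the lower bound is identical, and for the upper bound both arguments split the edges at $u$ into $F$-edges (length $\leq \mathrm{Diam}(K\cup F)$) and intra-$U_n$ edges (length $\leq \Lambda_n$), then use a nearest-neighbor argument to guarantee at least one intra-$U_n$ edge and bound the geometric mean via $\Lambda_n^{1/(|F|+1)}$. Your explicit appeal to Delaunay monotonicity (an edge between two $U_n$-points in $\mathcal{DT}(U_n\cup F)$ persists in $\mathcal{DT}(U_n)$, hence has length $\leq \Lambda_n$) makes precise a step the paper leaves terse, and your use of the event $\{\Lambda_n < d(F,K)\}$ is a cleaner substitute for the paper's cube-covering argument to force the nearest neighbor into $U_n$, but the architecture is the same.
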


\section{Examples}\label{sec::simus}

In Section \ref{sec::pointSphere}, we consider a simple simulated setting, where we put a single point at the origin and draw points from an uniform distribution around the boundary of the unit sphere. In Section \ref{sec::prices} we apply the method to stock price data.

\subsection{A point at the center of the unit sphere}\label{sec::pointSphere}

Consider a spherically distributed random variable $X: \Omega \to \mathbb{R}^k$
\begin{equation*}
X = R \, \Theta,
\end{equation*}
where $R$ is uniformly distributed over the interval $[0.7, 1.1]$ and $\Theta$ is uniformly distributed on the unit sphere $\mathbb{S}^{k-1}$. The random variables $R$ and $\Theta$ are independent. We simulate the values of Delaunay outlyingness, when the sample $U_n$ consists of values of $n$ independent copies of $X$ and the set of outliers $F$ is the set $\left\{ 0 \right\}$. Almost similar setting in dimension 2 is displayed in Figure \ref{fig::example}.

In the first setting, we simulated 5000 samples $U_{299}$ of $X$ in dimension 4, each consisting of 299 observations. We then calculated the values of the Delaunay outlyingness $f_{U_{299} \cup F }(x)$ for all $x \in U_{299} \cup F$. In Figure \ref{fig::rel1}, we display a histogram of the corresponding ratios $f_{U_{299} \cup F}(x)/f_{U_{299} \cup F}(0)$, with $x \in U_{299}$. We call the ratio $f_{U_{299} \cup F}(x) / f_{U_{299} \cup F}(0)$ the relative outlyingness of $x \in U_{299}$.

Note that the vast majority of the observed relative outlyingness values are well below 1, i.e. the point at the origin is clearly separated from most of the points observed. During the 5000 rounds we observed 84 relative outlyingness values that were over 1 and 297 values that were over 0.9. The latter corresponds to $0.02 \%$ of the observed values.

\begin{figure}
\centering
\includegraphics[scale=0.45]{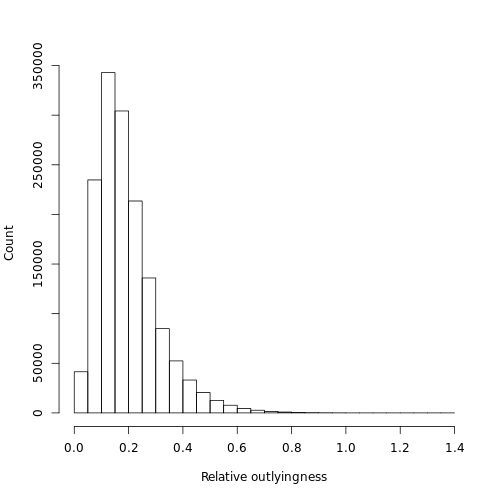}
\caption{A histogram of the obtained relative outlyingness values $f_{U_{299} \cup F}/f(0)$, with $x \in U_n$, obtained in dimension 4.}
\label{fig::rel1}
\end{figure}

We then compared the performance of our implementation in dimensions 3 and 5. We simulated 5000 samples of 199 observations and again calculated the relative outlyingness values. A comparison of the observed values is displayed in Figure \ref{fig::comparison}. In 3 dimensions we observed no relative outlyingness values over 1 and 5 values that were at least 0.9. The latter corresponds to less than one per million of the observed values. In 5 dimensions we observed 769 values over 1 and 2177 values that were at least 0.9.  The latter corresponds to $0.2 \%$ of the observed values. Thus the outlier at the origin is again distinguished with a rather high probability.

Note the high leftmost bar in the right histogram of Figure \ref{fig::comparison}. This seems to be due to small outlyingness scores rounding to zero in floating point arithmetic.

\begin{figure}[h!]
\begin{minipage}{0.4\textwidth}
\includegraphics[scale=0.33]{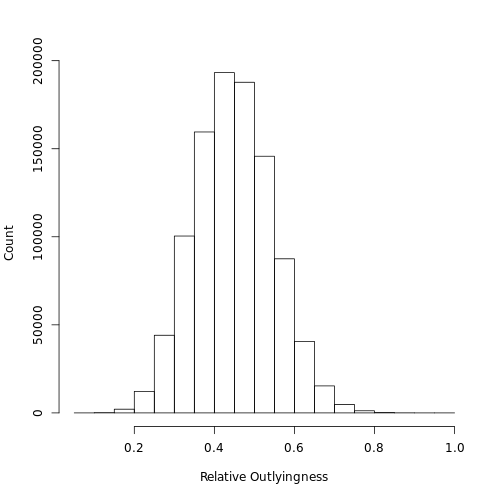}
\end{minipage}
\hspace{2cm}
\begin{minipage}{0.4\textwidth}
\includegraphics[scale=0.33]{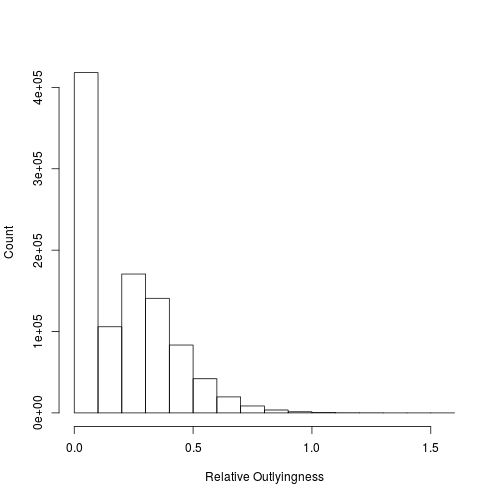}
\end{minipage}
\caption{A comparison of the relative outlyingness values $f_{U_{199}}/ f(0)$, with $x \in U_{199}$, obtained in dimension 3 (left) and 5 (right).}
\label{fig::comparison}
\end{figure}

\subsection{Stock prices}\label{sec::prices}

Let $u_i$ be the opening prices of General Electric (GE) Company stock and $v_i$ be the opening prices of Nokia (NOK) Corporation stock, where $i$ ranges from January 3 2007 to May 5 2016. A plot of points $(u_i, v_i)$ is displayed in Figure \ref{fig::geNok}. 

Let $T$ be the set of values of Delaunay outlyingness of the points $(u_i,v_i)$. Three plots of $T$ are displayed in Figure \ref{fig::delaunayGeNok}. In the leftmost one, all elements of $T$ are plotted. In the middle one, only the elements of $T$ that are at most $2$ are plotted. In the rightmost one, only the values of $T$ that are at most $0.2$ are plotted.

We identify outliers by setting a threshold $\alpha$ on the Delaunay outlyingness and flagging all points with Delaunay outlyingness at least $\alpha$ as outliers. In Figure \ref{fig::outliersGeNok}, we display the results yielded by three different choices of $\alpha$.

\begin{figure}
\includegraphics[scale=0.45]{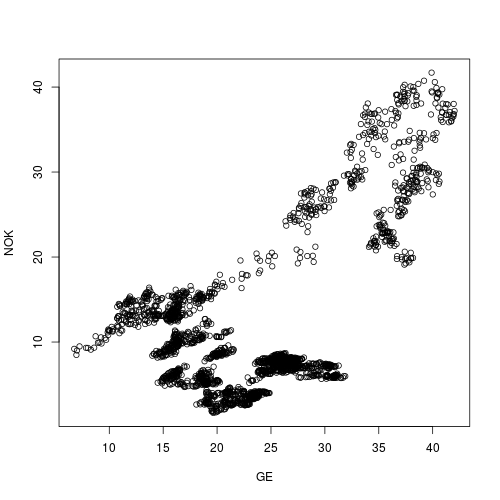}
\centering
\caption{The opening prices of NOK plotted against the opening prices of GE from January 3 2007 to May 5 2016.}
\label{fig::geNok}
\end{figure}

\begin{figure}[h!]
\begin{minipage}{0.30\textwidth}

\includegraphics[scale=0.25]{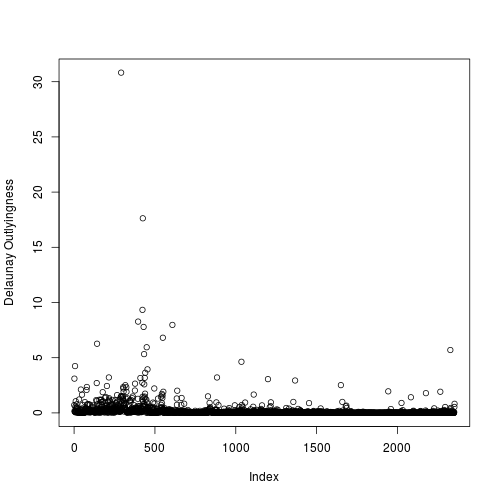}
\end{minipage}
\begin{minipage}{0.30\textwidth}

\includegraphics[scale=0.25]{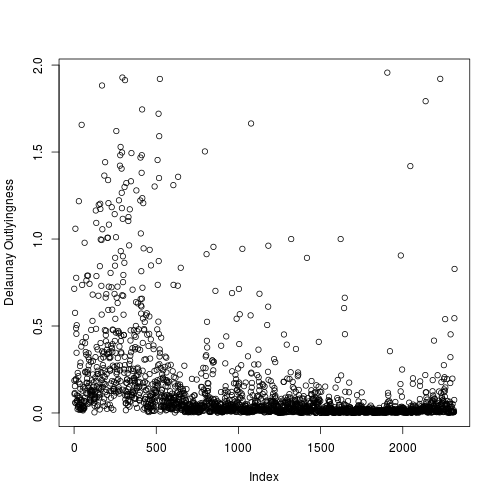}
\end{minipage}
\begin{minipage}{0.30\textwidth}

\includegraphics[scale=0.25]{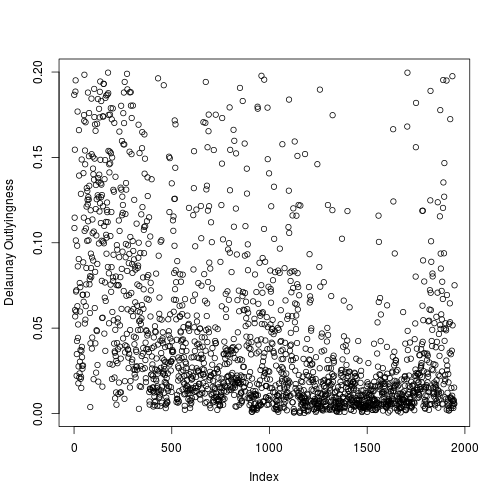}
\end{minipage}

\caption{Three plots of the values of the Delaunay outlyingness of the points displayed in Figure \ref{fig::geNok}.}
\label{fig::delaunayGeNok}
\end{figure}

\begin{figure}[h!]
\begin{minipage}{0.30\textwidth}

\includegraphics[scale=0.25]{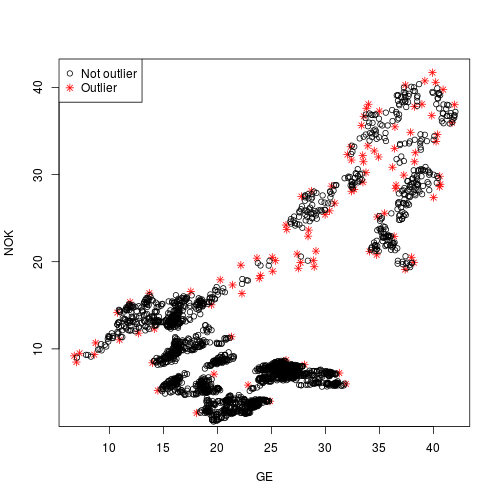}
\end{minipage}
\begin{minipage}{0.30\textwidth}

\includegraphics[scale=0.25]{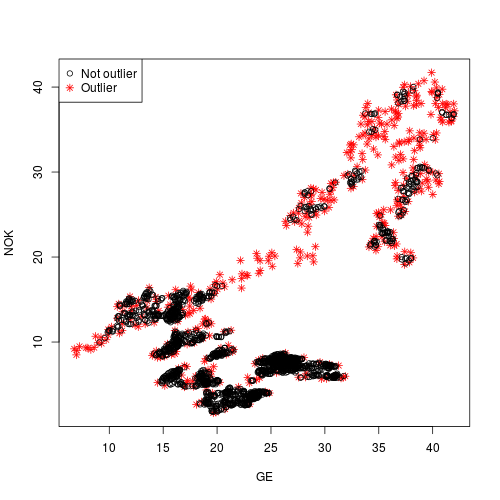}
\end{minipage}
\begin{minipage}{0.30\textwidth}

\includegraphics[scale=0.25]{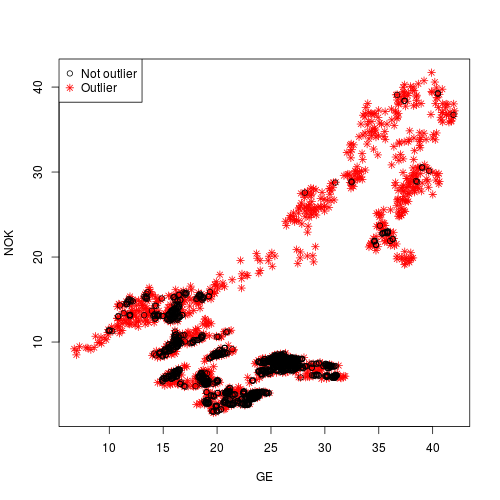}
\end{minipage}

\caption{Plot of the points appearing in Figure \ref{fig::geNok}, with points whose Delaunay outlyingness is at least $\alpha$ highlighted. In the leftmost plot $\alpha =1$, in the middle plot $\alpha=0.2$ and in the rightmost plot $\alpha=0.05$.}
\label{fig::outliersGeNok}
\end{figure}

\section{Discussion}\label{sec::discussion}

In this paper, we have proposed a quantity, Delaunay outlyingness, that can be used as an outlyingness score of observations. The quantity is almost surely uniquely defined for continuously distributed data and is thus provides a nonparametric approach to outlier detection. The behaviour of the quantity is simple enough to be analysed under a model, it is implementable and it appeals to intuition. A consistency result was given in a simple model of an outlier detection situation and we presented example applications in both simulated and real data.

Due to the unique and useful structure it endows a finite set with, the Delaunay triangulation could also find wider applications in statistical methodology as a data driven alternative to parametric approaches e.g. in change point analysis of multivariate time series (see Figures \ref{fig::timeSeries} and \ref{fig::timeSeriesDel}). Consider for example the classical $k$ nearest neighbour classifier in machine learning. Delaunay triangulation provides a possibility to omit the parameter $k$ and use the data given by the edges to a point in classification.

\begin{figure}[h!]
\begin{minipage}{0.30\textwidth}
\includegraphics[scale=0.25]{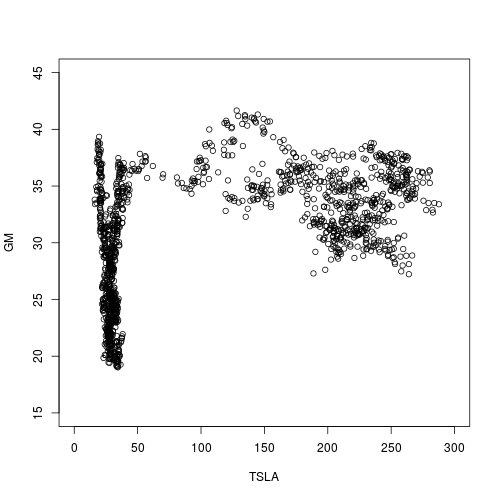}
\end{minipage}
\begin{minipage}{0.30\textwidth}
\includegraphics[scale=0.25]{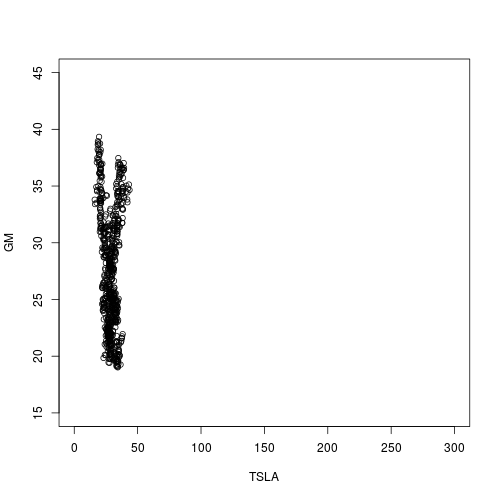}
\end{minipage}
\begin{minipage}{0.30\textwidth}
\includegraphics[scale=0.25]{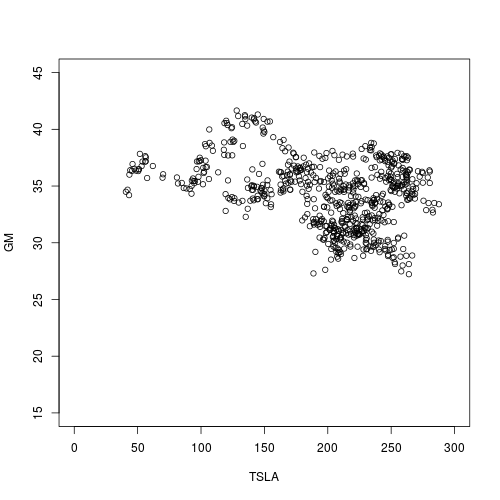}
\end{minipage}
\caption{Leftmost plot displays the opening prices of General Motors (GM) Company plotted against the opening prices of Tesla Motors (TSLA) Inc. from June 29 2010 to December 16 2015. In the middle plot only the prices of the first 700 days are displayed and in the rightmost plot only the prices of the days from 701st one onwards are displayed. Note that there is a clear change point near the 700th day and that the change point is visibile from Delaunay outlyingness values displayed in Figure \ref{fig::timeSeriesDel}.}
\label{fig::timeSeries}
\end{figure}

\begin{figure}[h!]
\centering
\includegraphics[scale=0.45]{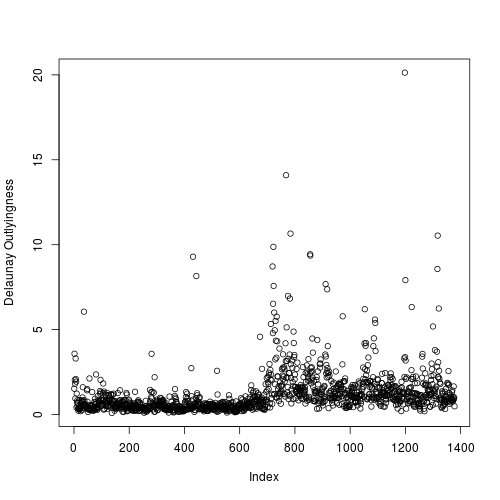}
\caption{The Delaunay outlyingness of the points displayed in the leftmost plot of Figure \ref{fig::timeSeries} with the index being the ordinal of the day in the observed interval. Note that there is a clear change point near the index 700.}
\label{fig::timeSeriesDel}
\end{figure}

\section*{Acknowledgements}

The author gratefully acknowledges support from the Magnus Ehrnrooth Foundation.

\bibliographystyle{humannat}
\bibliography{mybib}

\newpage

ADDRESS:

Matias Heikkilä

matias.heikkila@aalto.fi

Aalto University School of Science

Department of Mathematics and Systems Analysis

Otakaari 1

Espoo

Finland

\section*{Appendix: Proofs}

Note the following simple criterion for checking whether points are adjacent in the Delaunay triangulation.
\begin{lemma}\label{lemma::checkAdj}
Let $F \subset \mathbb{R}^k$ be a finite set of points in a general position and let $\mathcal{DT}(F) = (F,E)$. Let $x,y \in F$, $x \neq y$. Then $\left\{ x,y \right\} \in E$ if and only if there is a point $p$, equidistant from the points $x$ and $y$ and such that $d(p,z) \geq d(p,x)$ for all $z \in F$.
\end{lemma}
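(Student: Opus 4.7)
The plan is to derive the lemma directly from the definition of the Delaunay triangulation as the dual of the Voronoi diagram. Recall that $\{x,y\} \in E$ by definition if and only if $V_x \cap V_y \neq \varnothing$, so the task reduces to showing that $V_x \cap V_y \neq \varnothing$ is equivalent to the existence of a point $p$ with the stated properties. Both implications then follow by translating Voronoi cell membership into the defining distance inequalities.

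For the forward direction, I would suppose $\{x,y\} \in E$ and pick any $p \in V_x \cap V_y$. The membership $p \in V_x$ yields $d(p,x) \leq d(p,z)$ for every $z \in F$, and $p \in V_y$ yields $d(p,y) \leq d(p,z)$ for every $z \in F$. Instantiating the first inequality with $z = y$ and the second with $z = x$ forces $d(p,x) = d(p,y)$, while the first inequality simultaneously supplies the required bound $d(p,z) \geq d(p,x)$ for all $z \in F$.

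For the reverse direction, I would take a point $p$ as described and observe that the combination of $d(p,x) = d(p,y)$ with $d(p,z) \geq d(p,x)$ for all $z \in F$ immediately gives both $p \in V_x$ and $p \in V_y$; hence $V_x \cap V_y \neq \varnothing$ and $\{x,y\} \in E$.

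The argument is essentially definitional, so I do not expect a genuine obstacle. The only mild point worth flagging is that the single equidistance-plus-minimality condition on $p$ is enough to certify membership in \emph{both} Voronoi cells, because the equality $d(p,x)=d(p,y)$ lets one inequality do double duty. The general position hypothesis plays no role in the equivalence itself; it is needed only to make $\mathcal{DT}(F)$ unambiguously defined, and it is safe to invoke throughout.
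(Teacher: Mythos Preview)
Your proposal is correct and follows essentially the same approach as the paper: both directions are obtained by unpacking the definition of the Delaunay triangulation via $V_x \cap V_y \neq \varnothing$ and translating Voronoi-cell membership into the defining distance inequalities. If anything, your write-up is slightly more explicit than the paper's, but the argument is the same.
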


\begin{proof}
If $\left\{ x,y \right\} \in E$, $V_x \cap V_y \neq \varnothing$. Let $p \in V_x \cap V_y$. Now $p$ is, by definition, equidistant from $x$ and $y$. If a point $z \in V$ would be a strict lower bound for $\left\| x - p \right\|_2$, we would have $x \in V_z \setminus V_x$ which is a contradiction.

Let $p \in V_x \cap V_y$. If a point $p$ as above exists, $p \in V_x \cap V_y$ and $V_x \cap V_y \neq \varnothing$.
\end{proof}

\noindent
In particular, this implies that if $y$ is the nearest neighbour of $x$, then $x$ and $y$ are adjacent. Note also the following property of the possible test points $p$ in Lemma \ref{lemma::checkAdj}.

\begin{lemma}\label{lemma::plane}
Let $x,y \in \mathbb{R}^k$. Let $p \in \mathbb{R}^k$ be equidistant from $x$ and $y$. Then
\begin{equation*}
d(x,p) = \sqrt{ d(x,a)^2 + d(a,p)^2},
\end{equation*}
with $a = \frac{1}{2}x+\frac{1}{2}y$. In particular,
\begin{equation*}
d(x,p) \geq d(x,a), d(a,p).
\end{equation*}
\end{lemma}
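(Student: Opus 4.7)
The plan is to recognise that the hypothesis $d(x,p) = d(y,p)$ says exactly that $p$ lies on the perpendicular bisector hyperplane of the segment from $x$ to $y$, and that $a$ is the foot of the perpendicular from $p$ to the line through $x$ and $y$. Once that orthogonality is in hand, the claim is the Pythagorean theorem in the triangle with vertices $x$, $a$, and $p$, and the inequality $d(x,p) \geq d(x,a), d(a,p)$ is immediate from the fact that squared distances are non-negative.

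Concretely, I would proceed as follows. First, write $x - a = \tfrac{1}{2}(x - y)$, so $x - a$ is parallel to $x - y$. Next, expand the equality $\|x - p\|^2 = \|y - p\|^2$ to obtain
\begin{equation*}
\|x\|^2 - \|y\|^2 = 2 \langle x - y, p \rangle,
\end{equation*}
and observe that $\|x\|^2 - \|y\|^2 = \langle x - y, x + y\rangle = 2 \langle x - y, a\rangle$. Subtracting gives $\langle x - y, p - a \rangle = 0$, and hence $\langle x - a, p - a\rangle = 0$. Then expand
\begin{equation*}
d(x,p)^2 = \|(x - a) + (a - p)\|^2 = \|x - a\|^2 + 2\langle x - a, a - p\rangle + \|a - p\|^2,
\end{equation*}
and use the orthogonality to drop the cross term, yielding $d(x,p)^2 = d(x,a)^2 + d(a,p)^2$. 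Taking square roots gives the stated identity, and since both $d(x,a)^2$ and $d(a,p)^2$ are non-negative, $d(x,p) \geq \max\{d(x,a), d(a,p)\}$, which is stronger than what is asserted.

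There is no real obstacle here: the only thing to be careful about is the algebraic manipulation showing $\langle x - y, p - a\rangle = 0$, which is really just the standard characterisation of the perpendicular bisector. Everything else is one application of the Pythagorean identity and a trivial inequality.
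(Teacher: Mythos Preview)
Your proposal is correct and follows essentially the same approach as the paper: both establish the orthogonality $\langle x-a, p-a\rangle = 0$ and then apply the Pythagorean expansion of $\|(x-a)+(a-p)\|^2$. The only cosmetic difference is that you derive the orthogonality by a direct algebraic expansion of $\|x-p\|^2 = \|y-p\|^2$, whereas the paper phrases it geometrically (the equidistant locus is a hyperplane $T$ containing $a$, and $x-a$ is perpendicular to $T$); the content is identical.
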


\begin{proof}
The set of points equidistant from $x$ and $y$ form a $k-1$ dimensional plane $T$. Note that $a \in T$. Now, note that $a$ is the closest point of $T$ to $x$ and $y$ and the vector $x-a$ is thus perpendicular to the plane $T$. Now the line from $a$ to $p$ is contained in the plane and $p-a \perp x-a$. Thus
\begin{eqnarray*}
d(x,p)^2 &=& \left\langle x-p , x-p \right\rangle \\
 &=& \left\langle (x-a)+(a-p) , (x-a)+(a-p) \right\rangle\\
 &=& \left\langle x-a , x-a \right\rangle + 2 \left\langle x-a, a-p\right\rangle + \left\langle a-p, a-p \right\rangle \\
 &=& d(x,a)^2 + d(a,p)^2.
\end{eqnarray*}
\end{proof}

\begin{proof}[Proof of Lemma \ref{lemma::strictlyConvex}]
The metric
\begin{equation}\label{eq::productMetric}
d_{K \times K}((x,y),(x',y')) = d(x,x')+d(y,y')
\end{equation}
induces the product topology on $K \times K$. It is well known that, as a product of compact topological spaces, $K \times K$ is compact with respect to the product topology.

Consider a function $s: K \times K \mapsto \mathbb{R}$ defined by
\begin{equation*}
s(x,y) = d \left( \frac{1}{2}x + \frac{1}{2} y, \partial K \right).
\end{equation*}
The function can be expressed as $s = u \circ t$ with $u: K \to \mathbb{R}$ defined as $u(x) = d(x,\partial K)$ and $t: K \times K \to K$ defined as $t(x,y) = \frac{1}{2}x+\frac{1}{2}y$. Now $t$ is continuous with respect to the metric \eqref{eq::productMetric} and $u$ is continuous in $K$. Thus $s$ is continuous as a composition of continuous functions.

Let $\varepsilon > 0$, $\varepsilon < \mathrm{Diam}(K)$, and
\begin{equation*}
U_\varepsilon = \left\{ (x,y) \in K \times K \: \middle| \: d(x,y) = \varepsilon \right\}.
\end{equation*}
The set $U_\varepsilon$ is nonempty, since $\varepsilon < \mathrm{Diam}(K)$. Consider a convergent sequence $(x_n,y_n) \to (x,y)$, whose each element is in $U_\varepsilon$. Since the metric $d$ is a continuous function in $K \times K$, $d(x,y)$ is the limit of a sequence of constants $d(x_n,y_n) = \varepsilon$ and, consequentially, $d(x,y) = \varepsilon$. Thus $(x,y) \in U_\varepsilon$ and $U_\varepsilon$ is closed for all $\varepsilon >0$.

Now $U_\varepsilon$ is compact as a closed subset of a compact space. Thus the continuous function $s$ attains its infimum over $U_\varepsilon$. Denote this number by $\theta(\varepsilon)$. Let $x,y \in K$ be such that
\begin{equation*}
s(x,y) = d \left( \frac{1}{2}x + \frac{1}{2} y, \partial K \right) = \theta (\varepsilon)
\end{equation*}
and assume that $\theta (\varepsilon) = 0$. There is a point $p \in \partial K$ that minimizes the distance of $\partial K$ to $\frac{1}{2} x + \frac{1}{2} y$. This is because distance to the point $\frac{1}{2}x +\frac{1}{2}y$ is a continuous function in $K$ and $\partial K$ is a compact set. Thus $d \left( \frac{1}{2}x + \frac{1}{2} y, p\right) = 0$ and $p = \frac{1}{2}x + \frac{1}{2}y$. This is a contradiction, since $p \in \partial K$ and the set $K$ is strictly convex. Thus $\theta ( \varepsilon ) > 0$ for all $\varepsilon > 0$.
\end{proof}

\begin{proof}[Proof of Lemma \ref{lemma::insideBound}]
Let $\varepsilon > 0$. Let $\delta = \min \left\{ \frac{\varepsilon}{8}, \frac{\theta(\varepsilon)}{4}  \right\}$, with $\theta(\varepsilon)$ as in Lemma \ref{lemma::strictlyConvex}. The set
\begin{equation*}
\mathcal{U} = \left\{ B(x, \delta) \: \middle| \: x \in K \right\}
\end{equation*}
is an open cover of a compact set $K$. Thus, there is a finite set $\mathcal{F}\subset \mathcal{U}$ such that $\cup \mathcal{F} \supset K$. Since $X_n$ has a nonzero density over $K$ and $K$ is convex, we can select the cover $\mathcal{F}$ so that $\mathbb{P}(X_n \in A) > 0$ for all $A \in \mathcal{F}$. Asymptotically almost surely, there is an observation in each $A \in \mathcal{F}$. I.e.
\begin{equation*}
U_n \cap A \neq \varnothing,
\end{equation*}
for all $A \in \mathcal{F}$. Assume that this is the case. We can now show that there  can not be a an edge $e \in E_n$ with length $\varepsilon$.

Let $x,y \in U_n$ with $d(x,y) \geq \varepsilon$. According to Lemma \ref{lemma::checkAdj}, if ${x,y} \in E_n$, there is a point $p \in \mathbb{R}^k$ such that $x$ and $y$ are equidistant from $p$ and they are also minimize the distance between an element of $U_n$ and $p$.

Assume that $p \in E$. Let $a = \frac{1}{2}x+\frac{1}{2}y$. By Lemma \ref{lemma::plane}
\begin{equation*}
d(x,p) \geq d(x,a) = \frac{\varepsilon}{2}.
\end{equation*}
However, $p$ is contained in a ball $A \in \mathcal{F}$ of radius at most $\frac{\varepsilon}{8}$ along with an observation $z \in U_n, z \neq x,y$. By triangle inequality
\begin{equation*}
d(z,p) \leq \frac{\varepsilon}{4}.
\end{equation*}
Thus $x$ and $y$ do not minimize the distance between an element of $U_n$ and $p$ if $p \in K$.

Assume that $p \not\in K$ and consider the line segment $L({a,p})$. There is $q \in L({a,p})$ such that $q \in \partial K$. Now by Lemma \ref{lemma::plane}
\begin{equation*}
d(x,p) \geq d(a,p) = d(a,q) + d(q,p) \geq \theta( \varepsilon) + d(q,p),
\end{equation*}
with $\theta(\varepsilon)$ as in Lemma \ref{lemma::strictlyConvex}. Since, in $\mathcal{F}$, there is a ball $A$ of radius at most $\frac{\theta(\varepsilon)}{4}$ with $A \cap U_n \neq \varnothing$, there is, by triangle inequality, a point $z \in U_n$ with
\begin{equation*}
d(z,p) \leq \frac{\theta(\varepsilon)}{2} + d(q,p) < d(x,p).
\end{equation*} 
Thus $x$ and $y$ are not the two nearest points of $U_n$ to $p \not\in K$.
\end{proof}

\begin{proof}[Proof of Theorem \ref{thm::consistency}]
Let $\mathcal{DT}(U_n\cup F) = (U_n \cup F, E_n)$. Denote, for $x \in U_n \cup F$, by $E_n(x)$ the set of edges $e \in E_n$ such that $x \in e$.

Let
\begin{equation*}
\delta = \min \left\{ d(F,K), \min \limits_{x,y \in F} d(x,y) \right\} > 0.
\end{equation*}Now, for $x \in F$,
\begin{equation*}
f_{F \cup U_n} (x) = \left( \prod \limits_{e \in E_n(x)} \lambda (L(e)) \right)^{1 / \left| E_n(x) \right|} \geq \left( \delta^{E_n(x)} \right)^{1 / \left| E_n(x) \right|} = \delta.
\end{equation*}

Asymptotically almost surely, the nearest neighbour of every $x \in U_n$ is also in $U_n$. This can be seen by covering $K$ with a finite set of cubes with side length $a = \frac{\delta}{4 \sqrt{d}}$. Since $K$ is connected, in such covering, a cube always shares an edge with another cube. When there is an observation in each cube,
\begin{equation}\label{eq::nearestNeighbour}
\min \limits _{y \in U_n \setminus \left\{ x \right\} }d (x,y)\leq \frac{\delta}{2},
\end{equation}
for all $x \in U_n$. Now, by Lemma \ref{lemma::checkAdj} and triangle inequality, a point is adjacent to its nearest neighbour in the Delaunay triangulation.

Assume that the asymptotically almost sure bound \ref{eq::nearestNeighbour} holds and consider a point $x \in U_n$. The nearest neighbour of $x$ in $U_n \cup F$ then is a point $y \in U_n$ and $e =\left\{x,y \right\} \in E_n$. Now, since $x,y \in K$,
\begin{equation*}
\lambda(L(e)) = d(x,y) \leq \Lambda_n,
\end{equation*}
where $\Lambda_n$ is as in Lemma \ref{lemma::insideBound}.

Let
\begin{equation*}
\Delta = \mathrm{Diam}(K \cup F)
\end{equation*}
and 
\begin{equation*}
E_n^*(x) = \left\{ {x,y} \in E_n(x) \: \middle| \: y \in F \right\}, \quad  E_n^\circ(x) = \left\{ {x,y} \in E_n(x) \: \middle| \: y \in U_n \right\}.
\end{equation*}
Denote $\left| E_n(x) \right| = i$ and $\left| E^*_n(x) \right| = j$, then $\left| E^\circ_n(x) \right| = i-j$. Since the nearest neighbour of $x$ is in $U_n$, $i-j \geq 1$. Note that $j \leq \left| F \right|$.
\begin{eqnarray*}
f_{U_n \cup F} (x) &=& \left( \prod \limits_{e \in E_n(x)} \lambda (L(e)) \right)^{1 / \left| E_n(x) \right|} \\ &=& \left( \prod \limits_{e \in E^*_n(x)} \lambda (L(e)) \prod \limits_{e \in E^\circ_n(x)} \lambda (L(e))\right)^{1 / \left| E_n(x) \right|} \\ &\leq& \left( \Delta^{j} \Lambda_n^{i-j} \right)^{1 / i} \\
&\leq & \max \left\{ 1 , \Delta^{\left| F \right|} \right\} \Lambda_n^{1-\frac{j}{i}}
\end{eqnarray*}
Assume that $\Lambda_n < 1$, which is an asymptotically almost sure event and note that $j \leq \left| F \right|$ and $i \geq j +1$. Then
\begin{equation*}
1 - \frac{j}{i} \geq \min \left\{ 1 - \frac{s}{1+s} \: \middle| \: s \in \left\{ 1 , \dots, \left| F \right| \right\} \right\} = \frac{1}{1+\left| F \right|}
\end{equation*}
and
\begin{equation*}
\max \left\{ 1 , \Delta^{\left| F \right|} \right\} \Lambda_n^{1-\frac{j}{i}} \leq \max \left\{ 1 , \Delta^{\left| F \right|} \right\}\Lambda_n^\frac{1}{1+\left| F \right|} = g_n \to_P 0.
\end{equation*}
We may now define $\gamma_n$. When $\Lambda_n \geq 1$ or the bound \eqref{eq::nearestNeighbour} is invalid let $\gamma_n$ be any sufficiently large number. When the conditions hold, let $\gamma_n =g_n$. Since the conditions for $\gamma_n = g_n$ are asymptotically almost surely valid, $\gamma_n \to_P 0$.
\end{proof}

\end{document}